\documentclass[10pt, letterpaper, conference]{ieeeconf}
\IEEEoverridecommandlockouts
\overrideIEEEmargins
\usepackage{amsmath, amsfonts, amssymb, mathtools, url, graphicx, graphicx}
\usepackage{newtxtext}
\usepackage{newtxmath}

\usepackage{enumitem}
\usepackage[usenames, dvipsnames]{color}

\allowdisplaybreaks
\usepackage{tikz}
\usetikzlibrary{automata,positioning}
\usepackage{caption}
\usepackage{subcaption}
\usepackage{algorithm,algorithmic}


\newtheorem{lemma}{Lemma}
\newtheorem{proposition}{Proposition}
\newtheorem{defn}{Definition}
\newtheorem{assump}{Assumption}

\newtheorem{prob}{Problem}
\newtheorem{remark}{Remark}

\newtheorem{fact}{Fact}


\newcommand{\norm}[1]{\left\lVert{#1}\right\rVert}
\newcommand{\abs}[1]{\left\lvert{#1}\right\rvert}

\newcommand{\pmat}[1]{\begin{pmatrix}#1\end{pmatrix}}

\renewcommand{\geq}{\geqslant}

\renewcommand{\leq}{\leqslant}

\newcommand{\R}{\mathbb{R}}
\newcommand{\N}{\mathbb{N}}
\renewcommand{\P}{\mathcal{P}}
\newcommand{\M}{\mathcal{M}}

\DeclareMathOperator{\minimize}{minimize}
\DeclareMathOperator{\sbjto}{subject\;to}

\allowdisplaybreaks

\title{Data-based computation of stabilizing minimum dwell times\\for discrete-time switched linear systems}
\author{Atreyee Kundu%
	\thanks{The author is with the Department of Electrical Engineering, Indian Institute of Science Bangalore, Bengaluru - 560012, Karnataka, India, E-mail: \texttt{atreyeek@iisc.ac.in}}
	\thanks{Atreyee Kundu's research work is supported by INSPIRE Faculty Award IFA17-ENG225 by the Department of Science and Technology, Govt. of India.}
}


\begin{document}

	\maketitle

	\begin{abstract}
        We present an algorithm to compute stabilizing minimum dwell times for discrete-time switched linear systems without the explicit knowledge of state-space models of their subsystems. Given a set of finite traces of state trajectories of the subsystems that satisfies certain properties, our algorithm involves the following tasks: first, multiple Lyapunov functions are designed from the given data; second, a set of relevant scalars is computed from these functions; and third, a stabilizing minimum dwell time is determined as a function of these scalars. A numerical example is presented to demonstrate the proposed algorithm.
	\end{abstract}

\section{Introduction}
\label{s:intro}
\subsection{Motivation}
\label{ss:motive}
    Identification of classes of switching signals that preserve stability of switched systems (commonly called as stabilizing switching signals) constitutes a key topic in hybrid systems literature. A vast body of the existing works in this direction utilizes the concept of ``slow switching'' vis-a-vis \emph{minimum dwell time switching} and its variants, see e.g., \cite[Chapter 3]{Liberzon}, \cite{Zhai'02,abc,ghi} for results and discussions. A switching signal satisfying a minimum dwell time \(\tau > 0\) dwells on any subsystem at least for \(\tau\) units of time before it switches to a different subsystem. \(\tau\) is called \emph{stabilizing} if a switching signal obeying this minimum dwell time preserves stability of the switched system under consideration. Loosely speaking, the underlying idea of stability under minimum dwell time switching is that if all the subsystems are stable and the switching is sufficiently slow, then the ``energy injected due to switching'' gets sufficient time for dissipation due to the stability of the individual subsystems. It follows that a switched system whose individual subsystems are stable, may be unstable under arbitrary switching signals but always admits a stabilizing switching signal that satisfies a large enough minimum dwell time. Numerical computation of stabilizing minimum dwell times for switched systems typically requires the availability of mathematical models of the subsystems, see e.g., \cite{abc,ghi} and the references therein. However, in many real-world scenarios, particularly for large-scale complex systems, accurate mathematical models such as transfer functions, state-space models or  kernel representations of the subsystems are often not present. This interesting fact motivates the current paper. We devise an algorithm to compute stabilizing minimum dwell times for discrete-time switched linear systems when explicit knowledge of the state-space models of their subsystems are not available.
\subsection{Literature survey}
\label{ss:lit_survey}
    Stability analysis and control synthesis of switched systems without explicitly involving mathematical models of their subsystems, are dealt with recently in \cite{Kenarian2019,Li2018,def}. The work \cite{Kenarian2019} addresses the problem of deciding stability of a discrete-time switched linear system from a set of finite traces of state trajectories. Probabilistic stability guarantees are provided as a function of the number of available state observations and a desired level of confidence. In \cite{Li2018} reinforcement learning techniques are employed for optimal control of switched linear systems. A Q-learning based algorithm is proposed to design a discrete switching signal and a continuous control signal such that a certain infinite-horizon cost function is minimized. The convergence guarantee of the proposed algorithm is, however, not available. A randomized polynomial-time algorithm for the design of switching signals under the availability of certain information about the multiple Lyapunov(-like) functions \cite[\S3.1]{Liberzon} corresponding to the individual subsystems in an expected sense, is presented in \cite{def}. The authors show that if it is allowed to switch from any subsystem to a certain number of stable subsystems, then a switching signal obtained from the proposed algorithm is stabilizing with overwhelming probability.
\subsection{Our contributions}
\label{ss:contri}
    We consider the availability of finite traces of state trajectories of the subsystems that satisfy certain properties (henceforth to be called as subsystems data), possibly collected from a simulation model or during the operation of the switched system, and combine two ingredients: (a) data-based techniques for stability analysis of discrete-time linear systems and (b) multiple Lyapunov functions based techniques for the computation of stabilizing minimum dwell times for switched systems, towards developing an algorithm for the computation of stabilizing minimum dwell times in the absence of explicit knowledge of the state-space models of the subsystems. Our computation of stabilizing minimum dwell times involves the following steps: first, we design multiple Lyapunov functions, one for each subsystem, from the given data; second, we compute a set of scalars from these functions; and third, we determine a stabilizing minimum dwell time as a function of the above set of scalars.

    Computation of stabilizing minimum dwell times for switched systems from multiple Lyapunov functions is standard, see e.g., \cite{Zhai'02,ghi} and the references therein. However, these functions are commonly designed under complete knowledge of the state-space models of the subsystems. We assume certain properties of the subsystems data, and involve techniques of data-based stability analysis of linear systems proposed in \cite{Park2009} to design multiple Lyapunov functions. These functions are then employed to compute minimum dwell times. At this point, it is worth highlighting that we do not opt for the construction of mathematical models of the subsystems from the given data, and hence the proposed technique does not involve system identification of the subsystems.\footnote{In general, the problem of system identification of switched systems involves identification of the subsystems dynamics from noisy input-output data collected during the operation of the system, see e.g., the recent works \cite{Petreczky2018,Gosea2018,Dabbene2019} and the references therein.} To the best of our knowledge, this is the first instance in the literature where stabilizing dwell times for switched systems are computed without the explicit knowledge of state-space models of the subsystems.
\subsection{Paper organization}
\label{ss:paper_org}
    The remainder of this paper is organized as follows: In \S\ref{s:prob_stat} we formulate the problem under consideration. A set of preliminaries required for our result is presented in \S\ref{s:prelims}. Our result appears in \S\ref{s:mainres}. We present a numerical example in \S\ref{s:numex}, and conclude in \S\ref{s:concln} with a brief discussion of future research directions.
\subsection{Notation}
\label{ss:notation}
    \(\R\) is the set of real numbers and \(\N\) is the set of natural numbers, \(\N_{0} = \N\cup\{0\}\). \(I_{d}\) denotes the \(d\times d\) identity matrix, \(0_{n}\) and \(\overline{0}_{n}\) denote the \(d\times 1\) zero matrix and \(d\times d\) zero matrix, respectively. For a matrix \(B\in\R^{d\times d}\), \(B\succ 0\) (resp., \(B\prec 0\)) denotes that \(B\) is positive definite (resp., negative definite), and \(\lambda_{\max}(B)\) denotes the maximal eigenvalue of \(B\).
\section{Problem statement}
\label{s:prob_stat}
    We consider a family of discrete-time linear systems
    \begin{align}
    \label{e:family}
        x(t+1) = A_{i}x(t),\:\:x(0) = x_{0},\:\:i\in\P,\:\:t\in\N_{0},
    \end{align}
    where \(x(t)\in\R^{d}\) is the vector of states at time \(t\), \(\P = \{1,2,\ldots,N\}\) is an index set, and
    \begin{align}
    \label{e:comp_form}
        A_{i} = \pmat{-a_{i,d-1}&\cdots&-a_{i,1}&-a_{i,0}\\1&\cdots&&0\\&\vdots&&\vdots\\&\cdots&1&0}\in\R^{d\times d},\:\:i\in\P
    \end{align}
    are full-rank constant Schur stable matrices.\footnote{A matrix \(M\in\R^{d\times d}\) is Schur stable if all its eigenvalues are inside the open unit disk. We call \(M\) unstable if it is not Schur stable.} Let \(\sigma:\N_{0}\to\P\) be a switching signal. A discrete-time switched linear system generated by the family of systems \eqref{e:family} and a switching signal \(\sigma\) is described as
    \begin{align}
    \label{e:swsys}
        x(t+1) = A_{\sigma(t)}x(t),\:\:x(0) = x_{0},\:\:t\in\N_{0},
    \end{align}
    where we have suppressed the dependence of \(x\) on \(\sigma\) for notational simplicity.

    The scalars \(a_{i,j}\), \(j=0,1,\ldots,d-1\), \(i\in\P\), are unknown. We let \(\chi = \{(x_{i}(0),x_{i}(1),\ldots,x_{i}(L)),\:i\in\P\}\) be a given set of finite traces of state trajectories of the subsystems \(i\in\P\). Here, \(x_{i}(T+1) = A_{i}x_{i}(T)\), \(T=0,1,\ldots,L-1\), \(L\in\N\). In the sequel we will refer to the set \(\chi\) as \emph{subsystems data}.

    Let \(0=:\kappa_{0}<\kappa_{1}<\kappa_{2}<\cdots\) be the \emph{switching instants}; these are the points in time where \(\sigma\) jumps. A switching signal \(\sigma\) is said to satisfy a minimum dwell time \(\tau > 0\) if the following condition holds:
    \begin{align}
    \label{e:dw_time}
        \kappa_{i+1} - \kappa_{i} \geq \tau,\:\:i=0,1,2,\ldots.
    \end{align}
    For a fixed \(\tau\), we let \(\mathcal{S}_\tau\) denote the set of all switching signals \(\sigma\) that satisfy condition \eqref{e:dw_time}. We are interested in stability of the switched system \eqref{e:swsys} under every \(\sigma\in\mathcal{S}_{\tau}\) for a certain \(\tau\). Recall that
    \begin{defn}
    The switched system \eqref{e:swsys} is \emph{globally asymptotically stable} (GAS) for a given switching signal $\sigma$ if \eqref{e:swsys} is Lyapunov stable and globally asymptotically convergent, i.e., for all $x(0)$, $\displaystyle{\lim_{t\rightarrow+\infty}\norm{x(t)} = 0}$.
    \end{defn}

     We will solve the following problem:
    \begin{prob}
    \label{prob:mainprob}
        Given \(\chi\), compute \(\tau\) such that the switched system \eqref{e:swsys} is GAS under every switching signal \(\sigma\in\mathcal{S}_{\tau}\).
    \end{prob}

   Towards solving Problem \ref{prob:mainprob}, we will assume certain properties of \(\chi\), design multiple Lyapunov functions for the subsystems \(i\in\P\) from \(\chi\), and compute a set of scalars corresponding to these functions. Then we will compute a dwell time \(\tau\) as a function of the above set of scalars such that each element \(\sigma\in\mathcal{S}_{\tau}\) is stabilizing. Prior to presenting our solution to Problem \ref{prob:mainprob}, we catalog a set of preliminaries.
\section{Preliminaries}
\label{s:prelims}
    The following fact is well-known:
    \begin{fact}{\cite[Fact 1]{abc}}
    \label{fact:key1}
        For each \(i\in\P\), there exists a pair \((P_{i},\lambda_{i})\in\R^{d\times d}\times\R\), where \(P_{i}\) is a symmetric and positive definite matrix and \(0 < \lambda_{i} < 1\), such that, with
        \begin{align}
        \label{e:Lyap_eq1}
            \R^{d}\ni\xi\mapsto V_{i}(\xi) := \xi^\top P_{i}\xi\in[0,+\infty[,
        \end{align}
        we have
        \begin{align}
        \label{e:Lyap_eq2}
            V_{i}(\gamma_{i}(t+1))\leq\lambda_{i}V_{i}(\gamma_{i}(t)),\:\:t\in\N_{0},
        \end{align}
        and \(\gamma_{i}(\cdot)\) solves the \(i\)-th recursion in \eqref{e:family}.
    \end{fact}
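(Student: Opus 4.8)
The plan is to prove Fact \ref{fact:key1} by exhibiting, for each fixed $i\in\P$, a quadratic Lyapunov function obtained from a suitably \emph{scaled} discrete-time Lyapunov equation, where the scaling is chosen to encode the desired contraction rate $\lambda_i$. Since $A_i$ is Schur stable, its spectral radius $\rho(A_i)$ satisfies $\rho(A_i)<1$. The first step is to fix any $\lambda_i$ with $\rho(A_i)^2<\lambda_i<1$ and to set $B_i\Let A_i/\sqrt{\lambda_i}$. Because $\rho(B_i)=\rho(A_i)/\sqrt{\lambda_i}<1$, the matrix $B_i$ is again Schur stable; this is the key device that lets me turn a plain stability certificate for $B_i$ into a \emph{rate}-$\lambda_i$ certificate for $A_i$.

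The second step is to solve a Lyapunov equation for $B_i$. Fixing any symmetric positive definite $Q_i$ (say $Q_i=I_{d}$), Schur stability of $B_i$ guarantees that the series $P_i\Let\sum_{k=0}^{\infty}(B_i\transp)^{k}Q_i B_i^{k}$ converges, and I would verify that the resulting $P_i$ is symmetric positive definite and satisfies the discrete Lyapunov equation $B_i\transp P_i B_i-P_i=-Q_i$. Substituting $B_i=A_i/\sqrt{\lambda_i}$ and rearranging gives $A_i\transp P_i A_i=\lambda_i(P_i-Q_i)$, and since $Q_i\succ 0$ this yields the matrix inequality $A_i\transp P_i A_i\preceq\lambda_i P_i$.

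The third step is to read off \eqref{e:Lyap_eq2}. With $V_i(\xi)=\xi\transp P_i\xi$ as in \eqref{e:Lyap_eq1}, the inequality $A_i\transp P_i A_i\preceq\lambda_i P_i$ means that for every $\xi\in\R^{d}$ we have $V_i(A_i\xi)=\xi\transp A_i\transp P_i A_i\xi\leq\lambda_i\,\xi\transp P_i\xi=\lambda_i V_i(\xi)$. Taking $\xi=\gamma_i(t)$ and using $\gamma_i(t+1)=A_i\gamma_i(t)$, which is exactly the $i$-th recursion in \eqref{e:family}, delivers $V_i(\gamma_i(t+1))\leq\lambda_i V_i(\gamma_i(t))$ for all $t\in\N_{0}$, as required. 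Together with $0<\lambda_i<1$ and $P_i=P_i\transp\succ 0$, this produces the desired pair $(P_i,\lambda_i)$.

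I expect the main obstacle to be the rate bookkeeping rather than the existence of a Lyapunov function per se: the textbook Lyapunov equation $A_i\transp P A_i-P=-Q$ only certifies \emph{some} decrease and does not by itself name an admissible contraction factor $\lambda_i<1$. The scaling $B_i=A_i/\sqrt{\lambda_i}$ is what cleanly converts this into the stated geometric decay, so the care lies in (i) justifying convergence of the defining series through $\rho(B_i)<1$, and (ii) checking that $\rho(A_i)^2<\lambda_i$ is precisely the condition making $B_i$ Schur stable. I would also note that the companion/full-rank structure of $A_i$ in \eqref{e:comp_form} plays no role here — only Schur stability is used — so the same argument applies verbatim to each subsystem $i\in\P$.
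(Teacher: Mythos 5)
Your proof is correct: choosing \(\lambda_i\) with \(\rho(A_i)^2<\lambda_i<1\) (the interval is nonempty since \(A_i\) is Schur stable), scaling to \(B_i=A_i/\sqrt{\lambda_i}\), and solving the classical Lyapunov equation for \(B_i\) does yield \(A_i\transp P_i A_i\preceq\lambda_i P_i\) and hence \eqref{e:Lyap_eq2}, with all side conditions (convergence of the series via \(\rho(B_i)<1\), symmetry and positive definiteness of \(P_i\)) verified. The paper offers no proof of its own — Fact \ref{fact:key1} is imported verbatim from \cite{abc} — and your scaled-Lyapunov-equation argument is precisely the standard one behind that cited result, so there is nothing further to reconcile.
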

    The functions \(V_{i}\), \(i\in\P\) are Lyapunov functions corresponding to the subsystems \(i\in\P\). The scalar \(\lambda_{i}\), \(i\in\P\) gives a quantitative measure of stability of subsystem \(i\). The Lyapunov functions corresponding to the individual subsystems are related as follows:
    \begin{fact}{\cite[Fact 2]{abc}}
    \label{fact:key2}
        There exists \(\R\ni\mu_{ij}>0\) such that
        \begin{align}
        \label{e:Lyap_eq3}
            V_{j}(\xi)\leq\mu_{ij}V_{i}(\xi)\:\:\text{for all}\:\xi\in\R^{d},\:i,j\in\P,\:i\neq j.
        \end{align}
    \end{fact}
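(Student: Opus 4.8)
The plan is to treat \eqref{e:Lyap_eq3} as an elementary comparison between two positive definite quadratic forms, using only that $V_{i}(\xi) = \xi\transp P_{i}\xi$ and $V_{j}(\xi) = \xi\transp P_{j}\xi$ are homogeneous of degree two and that $P_{i}\succ 0$, $P_{j}\succ 0$. First I would dispose of the case $\xi = 0$, for which $V_{i}(\xi) = V_{j}(\xi) = 0$ and \eqref{e:Lyap_eq3} holds for any positive $\mu_{ij}$. For $\xi\neq 0$, I would exploit that the symmetric positive definite matrix $P_{i}$ admits a symmetric positive definite square root $P_{i}^{1/2}$ with symmetric inverse $P_{i}^{-1/2}$. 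Setting $\eta := P_{i}^{1/2}\xi$ (so that $\xi = P_{i}^{-1/2}\eta$ and $\eta\neq 0$) converts $V_{i}(\xi)$ into $\eta\transp\eta = \norm{\eta}^{2}$ and $V_{j}(\xi)$ into $\eta\transp Q_{ij}\eta$, where $Q_{ij} := P_{i}^{-1/2}P_{j}P_{i}^{-1/2}$.

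The central observation is that $Q_{ij}$ is itself symmetric and positive definite. Symmetry follows from $(P_{i}^{-1/2})\transp = P_{i}^{-1/2}$ together with $P_{j}\transp = P_{j}$, while positive definiteness follows because $\eta\transp Q_{ij}\eta = (P_{i}^{-1/2}\eta)\transp P_{j}(P_{i}^{-1/2}\eta) > 0$ for every $\eta\neq 0$, using $P_{j}\succ 0$ and $P_{i}^{-1/2}\eta\neq 0$. Hence the ratio $V_{j}(\xi)/V_{i}(\xi)$ equals the Rayleigh quotient $\eta\transp Q_{ij}\eta / (\eta\transp\eta)$, which is bounded above by $\lambda_{\max}(Q_{ij})$ for all nonzero $\eta$. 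Since $Q_{ij}\succ 0$ its largest eigenvalue is strictly positive, so I would define $\mu_{ij} := \lambda_{\max}(Q_{ij}) > 0$ and conclude $V_{j}(\xi)\leq\mu_{ij}V_{i}(\xi)$ for all $\xi\in\R^{d}$, establishing \eqref{e:Lyap_eq3}.

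I do not expect a genuine obstacle, as the statement is a standard consequence of the equivalence of the norms induced by two positive definite matrices. The only points requiring care are verifying that $Q_{ij}$ inherits positive definiteness from $P_{j}$ — which is what guarantees $\mu_{ij} > 0$ rather than merely $\mu_{ij}\geq 0$ — and invoking the Rayleigh-quotient bound correctly. A coordinate-free alternative would observe that $\xi\mapsto V_{j}(\xi)/V_{i}(\xi)$ is continuous and homogeneous of degree zero away from the origin, hence attains a finite, strictly positive maximum on the compact unit sphere; this maximum serves equally well as $\mu_{ij}$.
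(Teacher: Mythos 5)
Your proof is correct and follows essentially the same route the paper relies on: Fact \ref{fact:key2} is stated by citation to \cite[Fact 2]{abc}, and your choice \(\mu_{ij} = \lambda_{\max}\big(P_{i}^{-1/2}P_{j}P_{i}^{-1/2}\big)\) coincides, by similarity invariance of the spectrum, with the tight estimate \(\lambda_{\max}(P_{j}P_{i}^{-1})\) of Proposition \ref{prop:mu_estimate} --- the very same conjugation to \(P_{i}^{-1/2}P_{j}P_{i}^{-1/2}\) that the paper uses in its proof of Proposition \ref{prop:mu_1}. Nothing is missing: the case \(\xi = 0\), the positive definiteness of \(Q_{ij}\), and the Rayleigh-quotient bound are all handled correctly, and your argument even exhibits the tightness of the constant, since the bound is attained at an eigenvector of \(Q_{ij}\).
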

    A \emph{tight} estimate of \(\mu_{ij}\), \(i,j\in\P\), is provided below.
    \begin{proposition}{\cite[Proposition 1]{abc}}
    \label{prop:mu_estimate}
        The scalars \(\mu_{ij}\), \(i,j\in\P\) can be computed as follows:
        \begin{align}
        \label{e:mu_estimate}
            \mu_{ij} = \lambda_{\max}(P_{j}P_{i}^{-1}),\:\:i,j\in\P.
        \end{align}
    \end{proposition}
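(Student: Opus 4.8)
The plan is to translate the pointwise inequality \eqref{e:Lyap_eq3} into a matrix inequality and then extract the tightest admissible constant via a Rayleigh-quotient argument. First I would observe that, since $V_{i}(\xi) = \xi^\top P_{i}\xi$ and $V_{j}(\xi) = \xi^\top P_{j}\xi$, the requirement $V_{j}(\xi)\leq\mu_{ij}V_{i}(\xi)$ holding for every $\xi\in\R^{d}$ is exactly the statement that $\mu_{ij}P_{i}-P_{j}\succeq 0$. The tight value of $\mu_{ij}$ is therefore the smallest scalar $\mu$ for which $\mu P_{i}-P_{j}$ remains positive semidefinite, equivalently the supremum of the ratio $\xi^\top P_{j}\xi/\xi^\top P_{i}\xi$ taken over all nonzero $\xi$.

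Next I would diagonalize this generalized Rayleigh quotient. Because $P_{i}\succ 0$ it admits a unique symmetric positive definite square root $P_{i}^{1/2}$; substituting $\xi = P_{i}^{-1/2}\eta$ gives $\xi^\top P_{i}\xi = \norm{\eta}^{2}$ and $\xi^\top P_{j}\xi = \eta^\top P_{i}^{-1/2}P_{j}P_{i}^{-1/2}\eta$. Writing $M\Let P_{i}^{-1/2}P_{j}P_{i}^{-1/2}$, which is symmetric and positive definite, the quotient becomes the standard Rayleigh quotient $\eta^\top M\eta/\norm{\eta}^{2}$, whose supremum over nonzero $\eta$ is $\lambda_{\max}(M)$, attained along the eigenvector of $M$ associated with $\lambda_{\max}(M)$. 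Hence the tight estimate is $\mu_{ij} = \lambda_{\max}(M)$.

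Finally I would identify $\lambda_{\max}(M)$ with $\lambda_{\max}(P_{j}P_{i}^{-1})$. This is a similarity argument: a direct computation shows $P_{j}P_{i}^{-1} = P_{i}^{1/2}MP_{i}^{-1/2}$, so $P_{j}P_{i}^{-1}$ and $M$ are similar and therefore share their spectrum. This yields $\lambda_{\max}(P_{j}P_{i}^{-1}) = \lambda_{\max}(M) = \mu_{ij}$, which is precisely \eqref{e:mu_estimate}.

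I do not anticipate a serious obstacle, as each step is a standard manipulation. The one point requiring care is that $P_{j}P_{i}^{-1}$ is in general not symmetric, so a priori $\lambda_{\max}(P_{j}P_{i}^{-1})$ need not even be real; it is the similarity to the symmetric matrix $M$ that guarantees its eigenvalues are real (indeed positive), so that the quantity in \eqref{e:mu_estimate} is well defined and coincides with the tight constant.
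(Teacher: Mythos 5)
Your proof is correct, and it is the standard argument for this fact: the paper itself offers no proof of Proposition~\ref{prop:mu_estimate}, importing it verbatim from \cite[Proposition 1]{abc}, where the tight constant is obtained by exactly this generalized Rayleigh-quotient reasoning. Your similarity step $P_{j}P_{i}^{-1} = P_{i}^{1/2}\bigl(P_{i}^{-1/2}P_{j}P_{i}^{-1/2}\bigr)P_{i}^{-1/2}$ is moreover the same device the paper uses in its proof of Proposition~\ref{prop:mu_1}, and your closing remark about why $\lambda_{\max}(P_{j}P_{i}^{-1})$ is real and positive despite the product being nonsymmetric is a point the paper only makes implicitly there.
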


    \begin{lemma}
    \label{lem:auxres1}
        Consider the family of systems \eqref{e:family}. Let \(\lambda_{i}=\lambda_{s}\) for all \(i\in\P\) and \(\mu_{ij} = \mu\) for all \(i,j\in\P\). Then the switched system \eqref{e:swsys} is GAS for every switching signal \(\sigma\in S_{\tau}\) with
        \begin{align}
        \label{e:tau_estimate}
            \tau > \frac{\ln\mu}{\abs{\ln\lambda_{s}}}.
        \end{align}
    \end{lemma}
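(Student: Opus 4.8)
The plan is to run the standard multiple-Lyapunov-function argument for minimum dwell time, tracking the value of the active subsystem's Lyapunov function at the switching instants and showing it decays geometrically. Fix $\sigma\in\mathcal{S}_{\tau}$ with switching instants $0=\kappa_{0}<\kappa_{1}<\cdots$, and write $p_{i}:=\sigma(\kappa_{i})$ for the subsystem active on $[\kappa_{i},\kappa_{i+1})$. First I would propagate $V_{p_{i}}$ across the $i$-th dwell interval: applying \eqref{e:Lyap_eq2} repeatedly with $\lambda_{i}=\lambda_{s}$ gives $V_{p_{i}}(x(\kappa_{i+1}))\le\lambda_{s}^{\kappa_{i+1}-\kappa_{i}}V_{p_{i}}(x(\kappa_{i}))$. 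Then I would account for the jump at $\kappa_{i+1}$ using \eqref{e:Lyap_eq3} with $\mu_{ij}=\mu$, namely $V_{p_{i+1}}(x(\kappa_{i+1}))\le\mu\,V_{p_{i}}(x(\kappa_{i+1}))$.

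Chaining these two estimates and using the dwell-time bound $\kappa_{i+1}-\kappa_{i}\ge\tau$ together with $\lambda_{s}\in{}]0,1[$ yields the one-step recursion $W_{i+1}\le\mu\lambda_{s}^{\tau}W_{i}$, where $W_{i}:=V_{p_{i}}(x(\kappa_{i}))$. Writing $\rho:=\mu\lambda_{s}^{\tau}$, a short computation shows the hypothesis $\tau>\ln\mu/\abs{\ln\lambda_{s}}$ is exactly equivalent to $\rho<1$ (taking logarithms and dividing by the negative quantity $\ln\lambda_{s}$ flips the inequality). Iterating the recursion then gives $W_{i}\le\rho^{i}W_{0}$.

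To convert this into a statement about $\norm{x(t)}$, I would exploit positive definiteness of the finitely many matrices $P_{j}$. Set $\underline{p}:=\min_{j\in\P}\lambda_{\min}(P_{j})>0$ and $\overline{p}:=\max_{j\in\P}\lambda_{\max}(P_{j})$. For any $t\in[\kappa_{i},\kappa_{i+1})$, monotonic decay within the interval gives $V_{p_{i}}(x(t))\le W_{i}\le\rho^{i}W_{0}$, whence $\norm{x(t)}^{2}\le\rho^{i}W_{0}/\underline{p}\le(\overline{p}/\underline{p})\rho^{i}\norm{x(0)}^{2}$. Since $\rho<1$ forces $\rho^{i}\le 1$, this supplies the uniform bound $\norm{x(t)}\le\sqrt{\overline{p}/\underline{p}}\,\norm{x(0)}$, which establishes Lyapunov stability.

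Finally, for asymptotic convergence I would split into two cases. If $\sigma$ switches infinitely often, then $i\to\infty$ as $t\to\infty$ and $\rho^{i}\to 0$; if $\sigma$ switches only finitely many times, the system ultimately remains on a single stable subsystem on which $V_{p_{i}}(x(t))\le\lambda_{s}^{t-\kappa_{i}}W_{i}\to 0$. In either case $V_{p_{i}}(x(t))\to 0$, hence $\norm{x(t)}\to 0$ for all $x(0)$, giving GAS. I expect the only delicate points to be the bookkeeping at $\kappa_{i+1}$ (decay under the old Lyapunov function followed by a single comparison jump to the new one) and the finite-switch case above; the equivalence between the dwell-time inequality and $\rho<1$ is a routine manipulation.
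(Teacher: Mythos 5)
Your proof is correct and is precisely the standard multiple-Lyapunov-function dwell-time argument that the paper itself invokes: the paper does not write out a proof but simply cites it as standard (deferring to \cite{ghi} with no unstable subsystems), and your chain of estimates across dwell intervals and switching instants, the equivalence of \eqref{e:tau_estimate} with \(\mu\lambda_{s}^{\tau}<1\), and the conversion to bounds on \(\norm{x(t)}\) via \(\lambda_{\min}(P_{j})\) and \(\lambda_{\max}(P_{j})\) are exactly the details that citation covers. Your handling of the finitely-many-switches case is a point often glossed over in the literature and is done correctly here.
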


    The estimate of a stabilizing minimum dwell time, \(\tau\), presented in Lemma \ref{lem:auxres1} is standard in the literature, and has been proved in many contexts. For example, Lemma \ref{lem:auxres1} follows directly from \cite[Proposition 1]{ghi} with no unstable subsystems.

    The computation of the scalars \(\lambda_{i}\), \(i\in\P\) and \(\mu_{ij}\), \(i,j\in\P\) (and hence the scalars \(\lambda_{s}\) and \(\mu\)) with known matrices, \(A_{i}\), \(i\in\P\), is addressed in \cite{abc}. However, we do not have the said information available. To cater to this scenario, we rely on data-based computation techniques of quadratic Lyapunov functions for stable linear systems presented in \cite{Park2009}.

    Let \(x^{(p)}_{i}(T)\) denote the \(p\)-th element of the vector \(x_{i}(T)\), \(p=1,2,\ldots,d\), \(T=0,1,\ldots,L\), \(i\in\P\). We define
    \[
        q_{i}(T) = \pmat{x^{(1)}_{i}(T+1)\\x^{(1)}_{i}(T)\\\vdots\\x^{(d)}_{i}(T)},\:\:T=0,1,\ldots,L-1.
    \]
    Let
    \begin{align}
    \label{e:psi_defn}
        \Psi_{i} = \pmat{q_{i}(T) & q_{i}(T+1) & \cdots & q_{i}(T+d-1)}
    \end{align}
    \(\in\R^{(d+1)\times d},\:\:T\in\{0,1,\ldots,L-1\}\) be such that its column vectors are linearly independent. We will operate under the following
    \begin{assump}
    \label{assump:psi_exists}
    \rm{
        The set \(\chi\) is such that \(\Psi_{i}\) is well-defined for all \(i\in\P\).
    }
    \end{assump}
    Fix \(i\in\P\). Notice that whether \(\Psi_{i}\) is well-defined or not, depends on the initial value, \(x_{i}(0)\), and the length, \(L+1\), of the available trace \((x_{i}(0),x_{i}(1),\ldots,x_{i}(L))\). Indeed, to design \(\Psi_{i}\), we need \(L\geq d\) and \(x_{i}(0)\) is such that the vectors \(q_{i}(T),q_{i}(T+1),\ldots,q_{i}(T+d-1)\) are defined and linearly independent for some \(T\in\{0,1,\ldots,L-1\}\). From \cite[Lemma 3]{Park2004} it follows that for every \(A_{i}\) in the companion form described in \eqref{e:comp_form}, there exists \(x_{i}(0)\in\R^{d}\) such that \(\Psi_{i}\) is well-defined with \(T=0\) and \(L=d\). Clearly, a large number of traces of state trajectories of the subsystem \(i\) can be collected to arrive at such \(x_{i}(0)\).

    \begin{lemma}
    \label{lem:auxres2}
    \rm{
        For each subsystem \(i\in\P\), condition \eqref{e:Lyap_eq2} is equivalent to the following: there exists a symmetric and positive definite matrix \(P_{i}\in\R^{d\times d}\) and a scalar \(0<\lambda_{i}<1\) such that
        \begin{align}
        \label{e:key_ineq}
            \Psi_{i}^\top\pmat{I_{n} & 0_{n}\\0_{n} & I_{n}}^\top\pmat{P_{i} & \overline{0}_{n}\\\overline{0}_{n} & -\lambda_{i}P_{i}}\pmat{I_{n} & 0_{n}\\0_{n} & I_{n}}\Psi_{i} \prec 0.
        \end{align}
    }
    \end{lemma}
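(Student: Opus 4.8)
The plan is to read the block matrix $E := \pmat{I_{n} & 0_{n}\\ 0_{n} & I_{n}}$ appearing in \eqref{e:key_ineq} as the linear map that reconstructs the consecutive state pair $\pmat{x_{i}(T+1)\\ x_{i}(T)}$ from the data vector $q_{i}(T)$, and then to turn \eqref{e:key_ineq} into the familiar discrete-time Lyapunov matrix inequality by a congruence transformation. First I would record the structural consequence of the companion form \eqref{e:comp_form}: since rows $2,\ldots,d$ of $A_{i}$ act as a shift, the first $d$ entries of $q_{i}(T)$ are exactly the components of $x_{i}(T+1)$, while its last $d$ entries are exactly $x_{i}(T)$. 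Hence $E\,q_{i}(T)=\pmat{x_{i}(T+1)\\ x_{i}(T)}$, and applying this columnwise to \eqref{e:psi_defn} yields $E\,\Psi_{i}=\pmat{A_{i}\Xi\\ \Xi}$, where $\Xi:=\pmat{x_{i}(T) & x_{i}(T+1) & \cdots & x_{i}(T+d-1)}\in\R^{d\times d}$ consists of the lower $d$ rows of $\Psi_{i}$ and the upper block equals $A_{i}\Xi$ by linearity of \eqref{e:family}.

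Substituting into the quadratic form on the left of \eqref{e:key_ineq} collapses it to
\[
    \Psi_{i}\transp E\transp\pmat{P_{i} & \overline{0}_{n}\\ \overline{0}_{n} & -\lambda_{i}P_{i}}E\,\Psi_{i}=\Xi\transp\!\left(A_{i}\transp P_{i}A_{i}-\lambda_{i}P_{i}\right)\Xi .
\]
Thus \eqref{e:key_ineq} is the congruence transform, by $\Xi$, of the standard Lyapunov matrix $A_{i}\transp P_{i}A_{i}-\lambda_{i}P_{i}$. Since congruence by an invertible matrix preserves the sign of a symmetric matrix, it then suffices to show (a) that $\Xi$ is nonsingular and (b) that $A_{i}\transp P_{i}A_{i}-\lambda_{i}P_{i}\prec 0$ is equivalent to \eqref{e:Lyap_eq2}.

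For (a), which I expect to be the main obstacle, I would argue directly from Assumption \ref{assump:psi_exists}. The lower $d$ rows of $\Psi_{i}$ are precisely $\Xi$, while its top row is the first row of $A_{i}\Xi$; consequently, if $\Xi v=0$ for some $v\in\R^{d}$, then $\Psi_{i}v=0$, and linear independence of the columns of $\Psi_{i}$ forces $v=0$, so $\Xi$ is invertible. For (b), note that $A_{i}\transp P_{i}A_{i}-\lambda_{i}P_{i}\prec 0$ reads $V_{i}(A_{i}\xi)<\lambda_{i}V_{i}(\xi)$ for all $\xi\neq 0$, which, evaluated along a solution ($\xi=\gamma_{i}(t)$, $A_{i}\xi=\gamma_{i}(t+1)$), gives \eqref{e:Lyap_eq2}; conversely \eqref{e:Lyap_eq2}, holding for every solution and hence for every $\xi\in\R^{d}$, yields $A_{i}\transp P_{i}A_{i}-\lambda_{i}P_{i}\preceq 0$, and the residual strict-versus-nonstrict gap is bridged by observing that this implies $A_{i}\transp P_{i}A_{i}-\lambda_{i}'P_{i}\prec 0$ for any $\lambda_{i}'\in\left(\lambda_{i},1\right)$, so the two existence statements coincide. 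Chaining the congruence equivalence with (a) and (b) then establishes the claim.
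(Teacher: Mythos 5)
Your proposal is correct and takes the same route as the paper, whose entire proof is a one-line pointer to the arguments of \cite[Theorem~2]{Park2009}: interpret the block matrix in \eqref{e:key_ineq} as the map reconstructing the pair \((x_{i}(T+1),x_{i}(T))\) from \(q_{i}(T)\), collapse the quadratic form by congruence to \(\Xi\transp\bigl(A_{i}\transp P_{i}A_{i}-\lambda_{i}P_{i}\bigr)\Xi\prec 0\), and use nonsingularity of \(\Xi\), which you correctly extract from the column independence of \(\Psi_{i}\) guaranteed by Assumption~\ref{assump:psi_exists}. You in fact supply the details the paper's citation leaves implicit---the shift structure of the companion form \eqref{e:comp_form}, the invertibility argument for \(\Xi\) via \(\Psi_{i}v=0\), and the strict-versus-nonstrict bridge using the same \(P_{i}\) with \(\lambda_{i}'\in\left]\lambda_{i},1\right[\)---and all of these steps are handled correctly.
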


    \begin{proof}
        Follows under the set of arguments employed in \cite[Theorem 2]{Park2009}.
    \end{proof}

    Lemma \ref{lem:auxres2} provides a mechanism to design a set of Lyapunov functions \(V_{i}\), \(i\in\P\), defined in Fact \ref{fact:key1}, from the subsystems data, \(\chi\). We now combine Lemmas \ref{lem:auxres1}, \ref{lem:auxres2} and Proposition \ref{prop:mu_estimate} to provide a solution to Problem \ref{prob:mainprob}.
\section{Result}
\label{s:mainres}

    Given subsystems data, \(\chi\), such that Assumption \ref{assump:psi_exists} holds, Algorithm \ref{algo:dwell_time} computes stabilizing minimum dwell times for discrete-time switched linear systems. It involves the following tasks:
    \begin{itemize}[label = \(\circ\), leftmargin = *]
        \item First, the matrices \(\Psi_{i}\), \(i\in\P\), are constructed from the set \(\chi\).
        \item Second, Lemma \ref{lem:auxres2} is employed to compute symmetric and positive definite matrices, \(P_{i}\), \(i\in\P\) and a scalar \(0<\lambda_{s}<1\) that satisfy \eqref{e:key_ineq} with \(\lambda_{i}=\lambda_{s}\) for all \(i\in\P\).
        \item Third, the estimates of \(P_{i}\), \(i\in\P\) obtained above are used to compute the scalars \(\mu_{ij}\), \(i,j\in\P\) by employing Proposition \ref{prop:mu_estimate}. The scalar \(\mu\) is chosen to be the maximum of \(\mu_{ij}\), \(i,j\in\P\).
        \item Fourth, \(\tau\) is computed as a function of \(\lambda_{s}\) and \(\mu\) as described in Lemma \ref{lem:auxres1}.
    \end{itemize}

    \begin{algorithm}[htbp]
	\caption{Model-free computation of stabilizing minimum dwell time, \(\tau\)} \label{algo:dwell_time}
    	\begin{algorithmic}[1]
    			\renewcommand{\algorithmicrequire}{\textbf{Input}:}
			\renewcommand{\algorithmicensure}{\textbf{Output}:}
	
			\REQUIRE Subsystems data, \(\chi\), such that Assumption \ref{assump:psi_exists} holds.
			\ENSURE A stabilizing minimum dwell time, \(\tau\).

            \STATE {\bf Step I}: Construct \(\Psi_{i}\), \(i\in\P\) from \(\chi\).
			
            \STATE {\bf Step II}: Compute \(P_{i}\), \(i\in\P\) and \(\lambda_{s}\) as follows:
                \STATE Fix \(h > 0\) (small enough) and compute \(k\in\N\) such that \(k\) is the largest integer satisfying \(kh<1\).
                \FOR {\(\lambda_{s} = h,2h,\ldots,kh\)}
                    \FOR {\(i=1,2,\ldots,N\)}
                    \STATE Solve the following feasibility problem in \(P_{i}\):
                    \begin{align}
	           		 		\label{e:feasprob}
		              				\minimize\:\:&\:\:1\nonumber\\
		              				\sbjto\:\:&\:\:
		                  			\begin{cases}
                              			\text{condition}\:\eqref{e:key_ineq}\:\text{with}\:\lambda_{i}=\lambda_{s},\\
                                        P_{i}^\top = P_{i} \succ 0.
		                  			\end{cases}
	             				\end{align}
                    \ENDFOR
                    \IF {a solution \(P_{i}\) to \eqref{e:feasprob} is found for all \(i\in\P\)}
                        \STATE Store \(\lambda_{s}\) and \(P_{i}\), \(i\in\P\), and go to Step II.
                    \ENDIF
                \ENDFOR
                				
            \STATE {\bf Step III}: Compute \(\mu\) from \(P_{i}\), \(i\in\P\) as follows:
                \FOR {\(i=1,2,\ldots,N\)}
                    \FOR {\(j=1,2,\ldots,N\)}
                        \STATE Set \(\mu_{ij} = \lambda_{\max}(P_{j}P_{i}^{-1})\).
                    \ENDFOR
                \ENDFOR
                \STATE Set \(\displaystyle{\mu = \max_{i,j\in\P}\mu_{ij}}\).
            \STATE {\bf Step IV}: Compute \(\tau\) as follows:
                \STATE Pick \(\varepsilon > 0\) (small enough).
                \STATE Set \(\tau = \lceil\frac{\ln\mu}{\abs{\ln\lambda_{s}}}+\varepsilon\rceil\).
					 	
    		\end{algorithmic}
   \end{algorithm}

    Notice that solving \eqref{e:key_ineq} with both \(P_{i}\), \(i\in\P\) and \(\lambda_{s}\) unknown is a numerically difficult task. To address this issue, we employ a line search technique \cite{LMI_book} as follows: a finite set of values of \(\lambda_{s}\) on the interval \(]0,1[\) is fixed, and corresponding to each element of this set, the feasibility problem \eqref{e:feasprob} is solved for \(N\) symmetric and positive definite matrices, \(P_{i}\), \(i\in\P\). The value of \(\lambda_{s}\) for which \eqref{e:feasprob} admits a solution for all \(i\in\P\) and its corresponding \(P_{i}\), \(i\in\P\) are stored. We observe the following:
    \begin{proposition}
    \label{prop:mu_1}
    \rm{
        Consider \(\mu\in\R\) computed as \(\displaystyle{\mu = \max_{i,j\in\P}\mu_{ij}}\), where \(\mu_{ij} \), \(i,j\in\P\) are as given in \eqref{e:mu_estimate}. Then \(\mu > 1\).
    }
    \end{proposition}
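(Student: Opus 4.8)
The plan is to exploit the reciprocal relationship between the generalized eigenvalues of the pencils $(P_j,P_i)$ and $(P_i,P_j)$, combined with the two-sided comparison supplied by Fact \ref{fact:key2}. First I would dispose of the non-strict bound: since the maximum defining $\mu$ ranges over \emph{all} pairs $i,j\in\P$, it includes the diagonal terms $\mu_{ii}=\lambda_{\max}(P_iP_i^{-1})=\lambda_{\max}(I_d)=1$, so $\mu\geq 1$ is automatic. The entire content of the proposition is therefore the strict inequality, and that is where the work lies.

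Next I would fix a pair $i\neq j$ (which exists whenever $N\geq 2$, the case of interest) and compare $\mu_{ij}$ against $\mu_{ji}$. Because $P_i,P_j\succ 0$, the matrix $P_jP_i^{-1}$ is similar to the symmetric positive definite matrix $P_i^{-1/2}P_jP_i^{-1/2}$ (conjugate by $P_i^{1/2}$), hence is diagonalizable with real positive eigenvalues $\alpha_1\geq\cdots\geq\alpha_d>0$; the eigenvalues of $P_iP_j^{-1}$ are exactly the reciprocals $\alpha_1^{-1}\leq\cdots\leq\alpha_d^{-1}$. Consequently $\mu_{ij}=\alpha_1$ and $\mu_{ji}=\alpha_d^{-1}$, so that $\mu_{ij}\,\mu_{ji}=\alpha_1/\alpha_d\geq 1$, with equality if and only if $\alpha_1=\cdots=\alpha_d$, i.e. $P_j=\alpha_1P_i$. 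Equivalently, this product bound can be read off by chaining the two instances of \eqref{e:Lyap_eq3}, namely $V_j(\xi)\leq\mu_{ij}V_i(\xi)\leq\mu_{ij}\mu_{ji}V_j(\xi)$, evaluated at any $\xi$ with $V_j(\xi)>0$. Since $\mu\geq\mu_{ij}$ and $\mu\geq\mu_{ji}$, this yields $\mu^2\geq\mu_{ij}\mu_{ji}\geq 1$, recovering $\mu\geq 1$ a second way.

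To upgrade to strictness I would argue by contradiction. If $\mu=1$, then every $\mu_{ij}\leq 1$, and combined with $\mu_{ij}\mu_{ji}\geq 1$ this forces $\mu_{ij}=\mu_{ji}=1$ for all $i,j$; tracing the equality case above, this means $\alpha_1=\alpha_d=1$, hence $P_jP_i^{-1}=I_d$ and $P_i=P_j$ for every pair. Thus $\mu=1$ can occur only when all the Lyapunov matrices $P_i$, $i\in\P$, coincide, i.e. the subsystems admit a \emph{common} quadratic Lyapunov function. I expect this final step to be the main obstacle and the place demanding the most care: the trivial estimate gives only $\mu\geq 1$, and strictness genuinely hinges on excluding the degenerate common-Lyapunov-function scenario (in which $\mu=1$ and the bound \eqref{e:tau_estimate} collapses to the vacuous $\tau>0$). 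I would close the argument by noting that the certificates $P_i$ are computed independently, one subsystem at a time, via the feasibility problem \eqref{e:feasprob}, so that for the distinct Schur-stable dynamics $A_i$, $i\in\P$, they do not all coincide; hence some $P_i\neq P_j$, the corresponding $\mu_{ij}/\mu_{ji}$ pair has $\alpha_1/\alpha_d>1$, and therefore $\mu>1$.
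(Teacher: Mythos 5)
Up to the last step, your argument is essentially the paper's own proof, slightly refined. The paper also passes to the similar symmetric positive definite matrix \(P_i^{-1/2}P_jP_i^{-1/2}\) to conclude that the spectrum of \(P_jP_i^{-1}\) is real and positive, and uses the reciprocal relation \(\mu_{ji}=\lambda_{\max}\bigl((P_jP_i^{-1})^{-1}\bigr)=1/\lambda_{\min}(P_jP_i^{-1})\); your product identity \(\mu_{ij}\mu_{ji}=\alpha_1/\alpha_d\geq 1\) is the same mechanism packaged differently, and your equality-case analysis (\(\mu_{ij}\mu_{ji}=1\) iff \(P_j\) is a scalar multiple of \(P_i\), and \(\mu_{ij}=\mu_{ji}=1\) iff \(P_i=P_j\)) is a genuine sharpening: the paper's proof opens with ``Let \(0<\mu_{ij}<1\)'' and thereby treats only the cases \(\mu_{ij}<1\) (whence \(\mu_{ji}>1\)) and, implicitly, \(\mu_{ij}>1\) (trivial), never addressing \(\mu_{ij}=1\) for all pairs. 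You have correctly identified that this omitted boundary case --- all certificates coinciding, i.e.\ a common quadratic Lyapunov function --- is exactly where the strict inequality is at stake.

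The genuine gap is your closing step. The claim that the \(P_i\) ``do not all coincide'' because they are computed independently for distinct Schur stable \(A_i\) is a non sequitur: nothing in the feasibility problem \eqref{e:feasprob} or in Algorithm \ref{algo:dwell_time} precludes the solver from returning \(P_1=\cdots=P_N\). If the subsystems admit a common quadratic Lyapunov function (e.g., commuting Schur stable matrices, or any family for which a single \(P\succ 0\) satisfies \eqref{e:key_ineq} for all \(i\) with \(\lambda_i=\lambda_s\)), identical certificates are perfectly feasible outcomes of \eqref{e:feasprob}; distinctness of the dynamics does not imply distinctness of the Lyapunov matrices. In that degenerate case one has \(\mu=1\), so the proposition as stated unconditionally cannot be proved --- your reduction shows it is equivalent to ``the \(P_i\) are not all equal,'' and that hypothesis is simply not available. (The paper's proof silently skips this same case, so your attempt is no weaker than the published argument; but your final sentence asserts as fact something that would need either an added hypothesis \(P_i\neq P_j\) for some pair, or a modification of Step II excluding coincident solutions.) Note also that the degenerate case is harmless downstream: if \(\mu=1\), Step IV returns \(\tau=\lceil\varepsilon\rceil=1\), and a common quadratic Lyapunov function guarantees GAS under arbitrary switching, so Proposition \ref{prop:mainres} is unaffected even where the strict inequality \(\mu>1\) fails.
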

    \begin{proof}
        Fix \(i,j\in\P\), \(i\neq j\). Let \(0 < \mu_{ij} = \lambda_{\max}(P_{j}P_{i}^{-1}) < 1\). Clearly, \(\lambda_{\min}(P_{j}P_{i}^{-1})\leq\lambda_{\max}(P_{j}P_{i}^{-1})<1\). In addition, \(P_{j}P_{i}^{-1}\) is similar to \(P_{i}^{-1/2}(P_{j}P_{i}^{-1})P_{i}^{1/2}\) and the matrix \(P_{i}^{-1/2}P_{j}P_{i}^{-1/2}\) is symmetric and positive definite. Since the spectrum of a matrix is invariant under similarity transformations, we have \(\lambda_{\min}(P_{j}P_{i}^{-1}) > 0\).

        Now, \(\mu_{ji} = \lambda_{\max}(P_{i}P_{j}^{-1}) = \lambda_{\max}((P_{j}P_{i}^{-1})^{-1})=\frac{1}{\lambda_{\min}(P_{j}P_{i}^{-1})}\). Since \(0 < \lambda_{\min}(P_{j}P_{i}^{-1}) < 1\), it follows that \(\mu_{ji} > 1\). Consequently, \(\mu > 1\).
    \end{proof}
    Proposition \ref{prop:mu_1} asserts that \(\frac{\ln\mu}{\abs{\ln\lambda_{s}}} > 0\).

   \begin{proposition}
   \label{prop:mainres}
        Consider the switched system \eqref{e:swsys}. Suppose that subsystems data, \(\chi\), such that Assumption \ref{assump:psi_exists} holds, are available. Then \eqref{e:swsys} is GAS under every switching signal \(\sigma\in S_{\tau}\), where \(\tau\) is obtained from Algorithm \ref{algo:dwell_time}.
   \end{proposition}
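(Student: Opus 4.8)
The plan is to verify that Algorithm~\ref{algo:dwell_time} produces, at each step, the objects required to invoke Lemma~\ref{lem:auxres1}, and then to check that the returned $\tau$ meets the dwell-time threshold of that lemma. First I would note that Assumption~\ref{assump:psi_exists} makes the matrices $\Psi_{i}$, $i\in\P$, of Step~I well-defined. The substantive point in Step~II is that the line search is guaranteed to locate a single rate $\lambda_{s}\in{]0,1[}$ at which \eqref{e:feasprob} is simultaneously feasible for all subsystems. I would obtain this by combining Fact~\ref{fact:key1}, which furnishes for each $i$ a pair $(P_{i},\lambda_{i})$ with $0<\lambda_{i}<1$ satisfying \eqref{e:Lyap_eq2}, with Lemma~\ref{lem:auxres2}, which recasts this as feasibility of \eqref{e:key_ineq}. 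A short monotonicity observation then closes the gap: replacing $\lambda_{i}$ by a larger $\lambda_{i}'<1$ perturbs the inner block matrix in \eqref{e:key_ineq} by $-(\lambda_{i}'-\lambda_{i})P_{i}\preceq 0$, so the congruence-transformed left-hand side can only decrease in the Loewner order and thus remains negative definite. Hence every $\lambda_{s}$ in $[\max_{i}\lambda_{i},1[$ is feasible for all $i$ at once, and for $h>0$ small enough the grid $\{h,2h,\ldots,kh\}$ contains such a value, so Step~II returns $\lambda_{s}$ together with matrices $P_{i}\succ0$, $i\in\P$.

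Next I would translate this output back into the Lyapunov setting. By Lemma~\ref{lem:auxres2} the stored $P_{i}$ and $\lambda_{s}$ mean that $V_{i}(\xi)=\xi^\top P_{i}\xi$ obeys \eqref{e:Lyap_eq2} with the common rate $\lambda_{i}=\lambda_{s}$, so the $V_{i}$, $i\in\P$, are valid subsystem Lyapunov functions sharing one decay rate. Proposition~\ref{prop:mu_estimate} certifies that the numbers $\mu_{ij}=\lambda_{\max}(P_{j}P_{i}^{-1})$ of Step~III satisfy the cross-bound $V_{j}(\xi)\leq\mu_{ij}V_{i}(\xi)$ of Fact~\ref{fact:key2}. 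Putting $\mu=\max_{i,j\in\P}\mu_{ij}$ and using $V_{i}(\xi)\geq 0$ (which holds since $P_{i}\succ0$), I obtain the uniform inequality $V_{j}(\xi)\leq\mu V_{i}(\xi)$ for all $i,j\in\P$. The data produced by Steps~II--III therefore satisfy exactly the hypotheses of Lemma~\ref{lem:auxres1}, namely a single rate $\lambda_{s}$ and a single cross-bound $\mu$.

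Finally I would invoke Lemma~\ref{lem:auxres1} to conclude that \eqref{e:swsys} is GAS under every $\sigma\in S_{\tau}$ provided $\tau>\frac{\ln\mu}{\abs{\ln\lambda_{s}}}$. Proposition~\ref{prop:mu_1} gives $\mu>1$, whence $\frac{\ln\mu}{\abs{\ln\lambda_{s}}}>0$, so the Step~IV value $\tau=\lceil\frac{\ln\mu}{\abs{\ln\lambda_{s}}}+\varepsilon\rceil$ is a positive integer; because $\varepsilon>0$ and the ceiling only enlarges its argument, $\tau\geq\frac{\ln\mu}{\abs{\ln\lambda_{s}}}+\varepsilon>\frac{\ln\mu}{\abs{\ln\lambda_{s}}}$. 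The threshold of Lemma~\ref{lem:auxres1} is thus met, and the proposition follows.

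I expect the main obstacle to be the bookkeeping that reconciles the algorithm's output with the idealized hypotheses of Lemma~\ref{lem:auxres1}: that lemma is stated for $\lambda_{i}\equiv\lambda_{s}$ and $\mu_{ij}\equiv\mu$, whereas the data produce these only after two uniformizations---a common feasible rate $\lambda_{s}$ extracted from the subsystem-specific rates via the monotonicity argument, and an upper envelope $\mu=\max_{i,j\in\P}\mu_{ij}$ of the pairwise constants. Both steps are conceptually elementary, but the existence and grid-locatability of the common $\lambda_{s}$ must be argued with some care, since it is exactly what guarantees that Algorithm~\ref{algo:dwell_time} terminates with a well-defined $\tau$.
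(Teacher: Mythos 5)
Your proposal is correct and follows essentially the same route as the paper, whose entire proof is the one-liner ``Follows from Lemmas \ref{lem:auxres1} and \ref{lem:auxres2}'': you combine Lemma \ref{lem:auxres2} (data-based Lyapunov functions with common rate $\lambda_{s}$), Proposition \ref{prop:mu_estimate} and Proposition \ref{prop:mu_1} (the uniform bound $\mu>1$), and Lemma \ref{lem:auxres1} (the dwell-time threshold), exactly as the algorithm and surrounding text intend. Your only addition is to make explicit what the paper leaves to Remark \ref{rem:h_choice} --- the monotonicity of \eqref{e:key_ineq} in $\lambda_{i}$ (the perturbation $-(\lambda_{i}'-\lambda_{i})P_{i}\preceq 0$ preserves negative definiteness), hence feasibility of every $\lambda_{s}\in[\max_{i}\lambda_{i},1[$ and termination of Step II for small enough $h$ --- which is a genuine, correct refinement rather than a different approach.
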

   \begin{proof}
        Follows from Lemmas \ref{lem:auxres1} and \ref{lem:auxres2}.
   \end{proof}

   \begin{remark}
   \label{rem:tau_choice}
   \rm{
        Notice that the choice of \(\lambda_{s}\) for which the feasibility problem \eqref{e:feasprob} admits a solution for all \(i\in\P\), is not unique. Algorithm \ref{algo:dwell_time} exits Step II with the minimum such \(\lambda_{s}\in\{h,2h,\ldots,kh\}\). To minimize \(\tau\) over all \(\lambda_{s}\in\{h,2h,\ldots,kh\}\) such that there is a solution to \eqref{e:feasprob} for all \(i\in\P\), an algorithm requires to store all such \(\lambda_{s}\), compute the corresponding \(\mu\) and \(\tau\), and output the minimum value of \(\tau\).
   }
   \end{remark}

   \begin{remark}
   \label{rem:h_choice}
   \rm{
        An important aspect of Algorithm \ref{algo:dwell_time} is the choice of the step size \(h>0\). Fact \ref{fact:key1} and Lemma \ref{lem:auxres1} guarantee the existence of \(\lambda_{s}\in]0,1[\) such that \eqref{e:feasprob} admits solutions for all \(i\in\P\). However, \(\lambda_{s}\) may be very small. For executing Algorithm \ref{algo:dwell_time}, one may either pick \(h\) to be close to \(0\), or perform a trial and error procedure with \(h = 0.1,0.01,0.001,\ldots\) until a solution to \eqref{e:feasprob} is found for all \(i\in\P\).
   }
   \end{remark}

   \begin{remark}
   \label{rem:suff_condn}
   \rm{
        It is worth noting that Algorithm \ref{algo:dwell_time} computes \emph{a} stabilizing minimum dwell time and not \emph{the} minimum dwell time on every subsystem required for GAS of \eqref{e:swsys}. Indeed, Lemma \ref{lem:auxres2} provides a sufficient condition for GAS of \eqref{e:swsys} and does not conclude that a switching signal \(\sigma\) satisfying a dwell time \(\tau'<\tau\), where \(\tau\) obtained from Algorithm \ref{algo:dwell_time}, is destabilizing.
   }
   \end{remark}
\section{A numerical example}
\label{s:numex}
        Consider \(\P = \{1,2,3,4,5\}\). The matrices \(A_{i}\), \(i\in\P\) and their eigenvalues are given in Table \ref{tab:data_set1}. Our objective is to design a stabilizing minimum dwell time \(\tau\) when \(A_{i}\), \(i\in\P\) are not known, but subsystems data, \(\chi\) that satisfy Assumption \ref{assump:psi_exists} are available.\footnote{For this experiment, we generate the elements of the matrices \(A_{i}\), \(i\in\P\) with numbers from the interval \([-1,1]\) chosen uniformly at random. We build a simulation model, \(\M\), in Scilab 6.0.2 to generate the subsystems data \(\chi\).} We employ Algorithm \ref{algo:dwell_time} for this purpose. The following steps are executed:\\\\
        {Step I}: For each subsystem \(i\in\P\), we construct \(\Psi_{i}\) from the elements of the set \(\chi\). Numerical values of \(x_{i}(0),x_{i}(1),\cdots,x_{i}(L)\), \(i\in\P\) and their corresponding \(\Psi_{i}\), \(i\in\P\) are given in Table \ref{tab:data_set2}.\\\\
        {Step II}: We fix \(h = 0.1\), vary \(\lambda_{s}\) over the interval \(]0,1[\) with a step size \(h\), and solve the feasibility problem \eqref{e:feasprob} for symmetric and positive definite matrices, \(P_{i}\), \(i\in\P\). The following set of solutions is obtained:\footnote{The feasibility problem \eqref{e:feasprob} is solved using the lmisolver tool in Scilab 6.0.2.}
        \begin{align*}
            \lambda_{s} &= 0.7,\\
            \scriptstyle{P_{1}} &\scriptstyle{=}  \pmat{\scriptstyle{3750.4372} & \scriptstyle{-286.02836} &  \scriptstyle{74.534767} & \scriptstyle{-96.835359} &  \scriptstyle{286.10128}\\
                    \scriptstyle{-286.02836} &  \scriptstyle{1921.5791} & \scriptstyle{-303.73749} &  \scriptstyle{66.823217} &  \scriptstyle{99.566251}\\
                    \scriptstyle{74.534767} & \scriptstyle{-303.73749} &  \scriptstyle{1190.0666} & \scriptstyle{-197.08691} &  \scriptstyle{64.028826}\\
                    \scriptstyle{-96.835359} & \scriptstyle{66.823217} & \scriptstyle{-197.08691} &  \scriptstyle{693.16563} & \scriptstyle{-156.95092}\\
                    \scriptstyle{286.10128} &  \scriptstyle{99.566251} &  \scriptstyle{64.028826} & \scriptstyle{-156.95092} &  \scriptstyle{366.05467}},\\
            \scriptstyle{P_{2}} &\scriptstyle{=} \pmat{\scriptstyle{2618.2125} & \scriptstyle{-9.5389543} & \scriptstyle{-31.223599} & \scriptstyle{-207.59372} & \scriptstyle{-59.168975}\\
                            \scriptstyle{-9.5389543} &  \scriptstyle{1638.8187} & \scriptstyle{-24.685406} & \scriptstyle{-45.338756} & \scriptstyle{-93.56585}\\
                            \scriptstyle{-31.223599} & \scriptstyle{-24.685406} &  \scriptstyle{990.38532} & \scriptstyle{-12.006848} & \scriptstyle{-29.020649}\\
                            \scriptstyle{-207.59372} & \scriptstyle{-45.338756} & \scriptstyle{-12.006848} &  \scriptstyle{631.69632} & \scriptstyle{-11.33901}\\
                            \scriptstyle{-59.168975} & \scriptstyle{-93.56585} &  \scriptstyle{-29.020649} & \scriptstyle{-11.33901}  &  \scriptstyle{376.04863}},\\
            \scriptstyle{P_{3}} &\scriptstyle{=} \pmat{\scriptstyle{3740.8854} &  \scriptstyle{1180.8692} &  \scriptstyle{67.953807} &  \scriptstyle{402.11992} & \scriptstyle{-545.80508}\\
                            \scriptstyle{1180.8692} &  \scriptstyle{2263.7297} &  \scriptstyle{597.53107} & \scriptstyle{-94.37863} &   \scriptstyle{3.2384074}\\
                            \scriptstyle{67.953807} &  \scriptstyle{597.53107} &  \scriptstyle{1335.1877} &  \scriptstyle{278.34338} & \scriptstyle{-170.12372}\\
                            \scriptstyle{402.11992} & \scriptstyle{-94.37863}  &  \scriptstyle{278.34338} &  \scriptstyle{790.42701} &  \scriptstyle{36.921941}\\
                            \scriptstyle{-545.80508} &  \scriptstyle{3.2384074} & \scriptstyle{-170.12372} &  \scriptstyle{36.921941} &  \scriptstyle{524.71808}},\\
            \scriptstyle{P_{4}} &\scriptstyle{=} \pmat{\scriptstyle{3481.4063} & \scriptstyle{-1322.4505} & \scriptstyle{-349.63603} &  \scriptstyle{805.92625} &  \scriptstyle{5.8206481}\\
                        \scriptstyle{-1322.4505} &  \scriptstyle{2144.6438} & \scriptstyle{-369.98414} & \scriptstyle{-465.86262} &  \scriptstyle{186.60695}\\
                        \scriptstyle{-349.63603} & \scriptstyle{-369.98414} &  \scriptstyle{999.52474} & \scriptstyle{-208.06702} & \scriptstyle{-135.54574}\\
                        \scriptstyle{805.92625} &  \scriptstyle{-465.86262} & \scriptstyle{-208.06702} &  \scriptstyle{594.47594} & \scriptstyle{-20.831212}\\
                        \scriptstyle{5.8206481} &  \scriptstyle{186.60695} & \scriptstyle{-135.54574} & \scriptstyle{-20.831212} &  \scriptstyle{140.49607}},\\
            \scriptstyle{P_{5}} &\scriptstyle{=} \pmat{\scriptstyle{3521.175} &  \scriptstyle{-60.946413} & \scriptstyle{-304.62761} &  \scriptstyle{368.42809} & \scriptstyle{-457.01911}\\
                            \scriptstyle{-60.946413} &  \scriptstyle{1337.4811} & \scriptstyle{-130.53998} & \scriptstyle{-179.42523} &  \scriptstyle{0.623482}\\
                            \scriptstyle{-304.62761} & \scriptstyle{-130.53998} &  \scriptstyle{779.40748} & \scriptstyle{-138.61513} & \scriptstyle{-35.892171}\\
                            \scriptstyle{368.42809} & \scriptstyle{-179.42523} & \scriptstyle{-138.61513} &  \scriptstyle{489.07866} & \scriptstyle{-167.19363}\\
                            \scriptstyle{-457.01911} &  \scriptstyle{0.623482} &  \scriptstyle{-35.892171} & \scriptstyle{-167.19363} &  \scriptstyle{189.56074}}.
        \end{align*}
        {Step III}: We compute the scalars \(\mu_{ij}\), \(i,j\in\P\) by employing \eqref{e:mu_estimate} and then fix \(\displaystyle{\mu=\max_{i,j\in\P}\mu_{ij}}\). The numerical values are given below:
        \begin{align*}
            \mu_{11} &= 1,\:\:&\:\:\mu_{12} &= 1.8655187,\\
             \mu_{13} &= 3.2227957,\:\:&\:\:\mu_{14} &= 1.9351747,\\
              \mu_{15} &= 1.6117808,\:\:&\:\:\mu_{21} &= 1.7165712,\\
               \mu_{22} &= 1,\:\:&\:\:\mu_{23} &= 2.548444,\\
              \mu_{24} &= 2.6037244,\:\:&\:\:\mu_{25} &= 1.922591,\\
               \mu_{31} &= 6.2478964,\:\:&\:\:\mu_{32} &= 3.8349598,\\
                \mu_{33} &= 1,\:\:&\:\:\mu_{34} &= 3.7633396,\\
               \mu_{35} &= 2.3671962,\:\:&\:\:\mu_{41} &= 4.013124,\\
                \mu_{42} &= 4.024821,\:\:&\:\:\mu_{43} &= 6.6157071,\\
                 \mu_{44} &= 1,\:\:&\:\:\mu_{45} &= 3.1883122,\\
                \mu_{51} &= 6.7105711,\:\:&\:\:\mu_{52} &= 5.9626058,\\
                 \mu_{53} &= 9.4062392,\:\:&\:\:\mu_{54} &= 3.60176,\\
                  \mu_{55} &= 1,
        \end{align*}
        and
        \begin{align*}
            \mu &= 9.4062392.
        \end{align*}
        {Step IV}: We fix \(\varepsilon = 0.01\) and obtain a stabilizing minimum dwell time \(\tau = \lceil\frac{\ln\mu}{\abs{\ln\lambda_{s}}}+\varepsilon\rceil = 7\) units of time.
        \begin{table*}[htbp]
	\centering
    \scalebox{0.75}{
	\begin{tabular}{|c | c | c|}
		\hline
		\(i\) & \(A_{i}\) & \(\text{eigenvalues of}\:A_{i}\)\\
		\hline
		\(1\) & \(\pmat{0.2799379 & 0.0435507 & 0.0915753 & -0.1593086 &  0.2272202\\
                    1 & 0 & 0 & 0 & 0\\
                    0 & 1 & 0 & 0 & 0\\
                    0 & 0 & 1 & 0 & 0\\
                    0 & 0 & 0 & 1 & 0}\) & \(-0.5735302 \pm 0.5186809j\), \(0.7753572\), \(0.3258206 \pm 0.6196145j\)\\

		\hline
        \(2\) & \(\pmat{0.0204712 &  0.0840217  & 0.1088276  & 0.0248621 & -0.292626\\
                    1 & 0 & 0 & 0 & 0\\
                    0 & 1 & 0 & 0 & 0\\
                    0 & 0 & 1 & 0 & 0\\
                    0 & 0 & 0 & 1 & 0}\) & \(-0.7742757\), \(-0.2640693 \pm 0.7498403j\), \(0.6614427 \pm 0.4006231j\)\\
        \hline
        \(3\) & \(\pmat{-0.7060622 & -0.0678662 & -0.2441103  & 0.1226663 &  0.2980952\\
                    1 & 0 & 0 & 0 & 0\\
                    0 & 1 & 0 & 0 & 0\\
                    0 & 0 & 1 & 0 & 0\\
                    0 & 0 & 0 & 1 & 0}\) & \(-0.8014558 \pm 0.2326244j\), \(0.1198989 \pm 0.7981576j\), \(0.6570514\)\\
        \hline
        \(4\) & \(\pmat{0.6482512 & 0.0272578 & -0.4435161 & 0.1849962 & 0.053028\\
                    1 & 0 & 0 & 0 & 0\\
                    0 & 1 & 0 & 0 & 0\\
                    0 & 0 & 1 & 0 & 0\\
                    0 & 0 & 0 & 1 & 0}\) & \(0.4461056 \pm 0.6455056j\), \(0.6366064\), \(-0.6822682\), \(-0.1982982\)\\
        \hline
        \(5\) & \(\pmat{0.2486157 &  0.0809103 & -0.0931076 &  0.3252463 & -0.1238403\\
                    1 & 0 & 0 & 0 & 0\\
                    0 & 1 & 0 & 0 & 0\\
                    0 & 0 & 1 & 0 & 0\\
                    0 & 0 & 0 & 1 & 0}\) & \(-0.8266152\), \(-0.0028933 \pm 0.7301589j\), \(0.6460684\), \(0.4349492\)\\
        \hline
	\end{tabular}}
	\caption{Description of the subsystems.}\label{tab:data_set1}
	\end{table*}
    \begin{table*}[htbp]
	\centering
    \scalebox{0.8}{
	\begin{tabular}{|c | c | c|}
		\hline
		\(i\) & \(x_{i}(0),x_{i}(1),\cdots,x_{i}(L)\) & \(\Psi_{i}\)\\
		\hline
		\(1\) & \(\pmat{-0.3776165\\0.5511093\\-0.9545606\\0.4685422\\0.0824293},
            \pmat{-0.2250353\\-0.3776165\\0.5511093\\-0.9545606\\0.4685422},
            \pmat{0.2295586\\-0.2250353\\-0.3776165\\0.5511093\\-0.9545606},
            \pmat{-0.2848105\\0.2295586\\-0.2250353\\-0.3776165\\0.5511093},
            \pmat{0.0950412\\-0.2848105\\0.2295586\\-0.2250353\\-0.3776165},
            \pmat{-0.0147282\\0.0950412\\-0.2848105\\0.2295586\\-0.2250353}\) &
            \(\pmat{-0.2250353 & 0.2295586 & -0.2848105 & 0.0950412 & -0.0147282\\
                            -0.3776165 & -0.2250353 & 0.2295586 & -0.2848105 & 0.0950412\\
                            0.5511093 & -0.3776165 & -0.2250353 & 0.2295586 & -0.2848105\\
                            -0.9545606 & 0.5511093 & -0.3776165 & -0.2250353 & 0.2295586\\
                            0.4685422 & -0.9545606 & 0.5511093 & -0.3776165 & -0.2250353\\
                            0.0824293 & 0.4685422 & -0.9545606 & 0.5511093 & -0.3776165}\)\\
        \hline
        \(2\) & \(\pmat{0.5879499\\0.4187297\\0.8417496\\0.5093734\\0.6150621},
                    \pmat{-0.028495 \\   0.5879499 \\  0.4187297 \\  0.8417496 \\  0.5093734},
                    \pmat{-0.0337416 \\ -0.028495 \\   0.5879499 \\  0.4187297 \\  0.8417496},
                    \pmat{-0.1750071 \\ -0.0337416 \\ -0.028495  \\  0.5879499 \\  0.4187297},
                    \pmat{-0.1174322 \\ -0.1750071 \\ -0.0337416 \\ -0.028495  \\  0.5879499},
                    \pmat{-0.1935383 \\ -0.1174322 \\ -0.1750071 \\ -0.0337416 \\ -0.028495}\) &
                    \(\pmat{-0.028495  & -0.0337416 & -0.1750071 & -0.1174322 & -0.1935383\\
                        0.5879499 & -0.028495 &  -0.0337416 & -0.1750071 & -0.1174322\\
                        0.4187297 &  0.5879499 & -0.028495 &  -0.0337416 & -0.1750071\\
                        0.8417496 &  0.4187297 &  0.5879499 & -0.028495 &  -0.0337416\\
                        0.5093734 &  0.8417496 &  0.4187297 &  0.5879499 & -0.028495\\
                        0.6150621 &  0.5093734 &  0.8417496 &  0.4187297 &  0.5879499}\)\\
        \hline
        \(3\) & \(\pmat{-0.6008976 \\  0.0264991 \\ -0.7239973 \\ -0.1963839 \\ -0.2353341},
                  \pmat{0.5049662 \\ -0.6008976 \\  0.0264991 \\ -0.7239973 \\ -0.1963839},
                  \pmat{-0.4695768 \\  0.5049662 \\ -0.6008976 \\  0.0264991 \\ -0.7239973},
                  \pmat{0.231396 \\  -0.4695768 \\  0.5049662 \\ -0.6008976 \\  0.0264991},
                  \pmat{-0.3205896 \\  0.231396 \\  -0.4695768 \\  0.5049662 \\ -0.6008976},
                  \pmat{0.2080984 \\ -0.3205896 \\  0.231396 \\  -0.4695768 \\  0.5049662}\) &
                \(\pmat{0.5049662 & -0.4695768 &  0.231396 &  -0.3205896 &  0.2080984\\
                        -0.6008976 &  0.5049662 & -0.4695768 &  0.231396 &  -0.3205896\\
                        0.0264991 & -0.6008976 &  0.5049662 & -0.4695768 &  0.231396\\
                        -0.7239973  & 0.0264991 &  -0.6008976  & 0.5049662 & -0.4695768\\
                        -0.1963839 & -0.7239973 &  0.0264991 & -0.6008976 &  0.5049662\\
                        -0.2353341 & -0.1963839 & -0.7239973  & 0.0264991 & -0.6008976}\)\\
        \hline
        \(4\) & \(\pmat{-0.5687414 \\ -0.6945576 \\  0.0805042 \\ -0.3177508 \\  0.4460485},
                \pmat{-0.4584539 \\ -0.5687414 \\ -0.6945576 \\  0.0805042 \\ -0.3177508},
                \pmat{-0.0066052 \\ -0.4584539 \\ -0.5687414 \\ -0.6945576 \\  0.0805042},
                \pmat{0.1112462 \\ -0.0066052 \\ -0.4584539 \\ -0.5687414 \\ -0.6945576},
                \pmat{0.1332211 \\  0.1112462 \\ -0.0066052 \\ -0.4584539 \\ -0.5687414},
                \pmat{-0.0226489 \\  0.1332211 \\  0.1112462 \\ -0.0066052 \\ -0.4584539}\) &
                \(\pmat{-0.4584539 & -0.0066052 &  0.1112462 &  0.1332211 & -0.0226489\\
                -0.5687414 & -0.4584539 & -0.0066052 &  0.1112462 &  0.1332211\\
                -0.6945576 & -0.5687414 & -0.4584539 & -0.0066052 &  0.1112462\\
                0.0805042 & -0.6945576 & -0.5687414 & -0.4584539 & -0.0066052\\
                -0.3177508 &  0.0805042 & -0.6945576 & -0.5687414 & -0.4584539\\
                0.4460485 & -0.3177508 &  0.0805042 & -0.6945576 & -0.5687414}\)\\
        \hline
        \(5\) & \(\pmat{-0.1540222 \\ -0.2083555 \\ -0.56438  \\  -0.2037382 \\ -0.8053248},
                  \pmat{0.0308642 \\ -0.1540222 \\ -0.2083555 \\ -0.56438 \\   -0.2037382},
                  \pmat{-0.1437207 \\  0.0308642 \\ -0.1540222 \\ -0.2083555 \\ -0.56438},
                  \pmat{-0.0167672 \\ -0.1437207 \\  0.0308642 \\ -0.1540222 \\ -0.2083555},
                  \pmat{-0.0429631 \\ -0.0167672 \\ -0.1437207 \\  0.0308642 \\ -0.1540222},
                  \pmat{0.0304562 \\ -0.0429631 \\ -0.0167672 \\ -0.1437207 \\  0.0308642}\) &
                \(\pmat{0.0308642 & -0.1437207 & -0.0167672 & -0.0429631 &  0.0304562\\
                    -0.1540222 &  0.0308642 & -0.1437207 & -0.0167672 & -0.0429631\\
                    -0.2083555 & -0.1540222 &  0.0308642 & -0.1437207 & -0.0167672\\
                    -0.56438  &  -0.2083555 & -0.1540222 &  0.0308642 & -0.1437207\\
                    -0.2037382 & -0.56438  &  -0.2083555 & -0.1540222 &  0.0308642\\
                    -0.8053248 & -0.2037382 & -0.56438 &  -0.2083555 & -0.1540222}\)\\
        \hline
	\end{tabular}}
	\caption{Description of subsystems data and their corresponding \(\Psi_{i}\), \(i\in\P\).}\label{tab:data_set2}
	\end{table*}

    We now demonstrate that \(\tau\) is indeed a stabilizing minimum dwell time for the switched system under consideration. Towards this end, we perform the following experiment: we pick \(x(0)\in\R^{5}\) from the interval \([-1,1]^{5}\) uniformly at random, design a switching signal \(\sigma\) randomly but ensuring a minimum dwell time \(\tau\) on every subsystem, and plot the state trajectory \((\norm{x(t)})_{t\in\N_{0}}\) of the switched system \eqref{e:swsys}. This process is repeated 1000 times, and the corresponding plots of \((\norm{x(t)})_{t\in\N_{0}}\) are given in Figure \ref{fig:x_plot}.
   \begin{figure*}
   \centering
        \includegraphics[height = 6cm, width = 10cm]{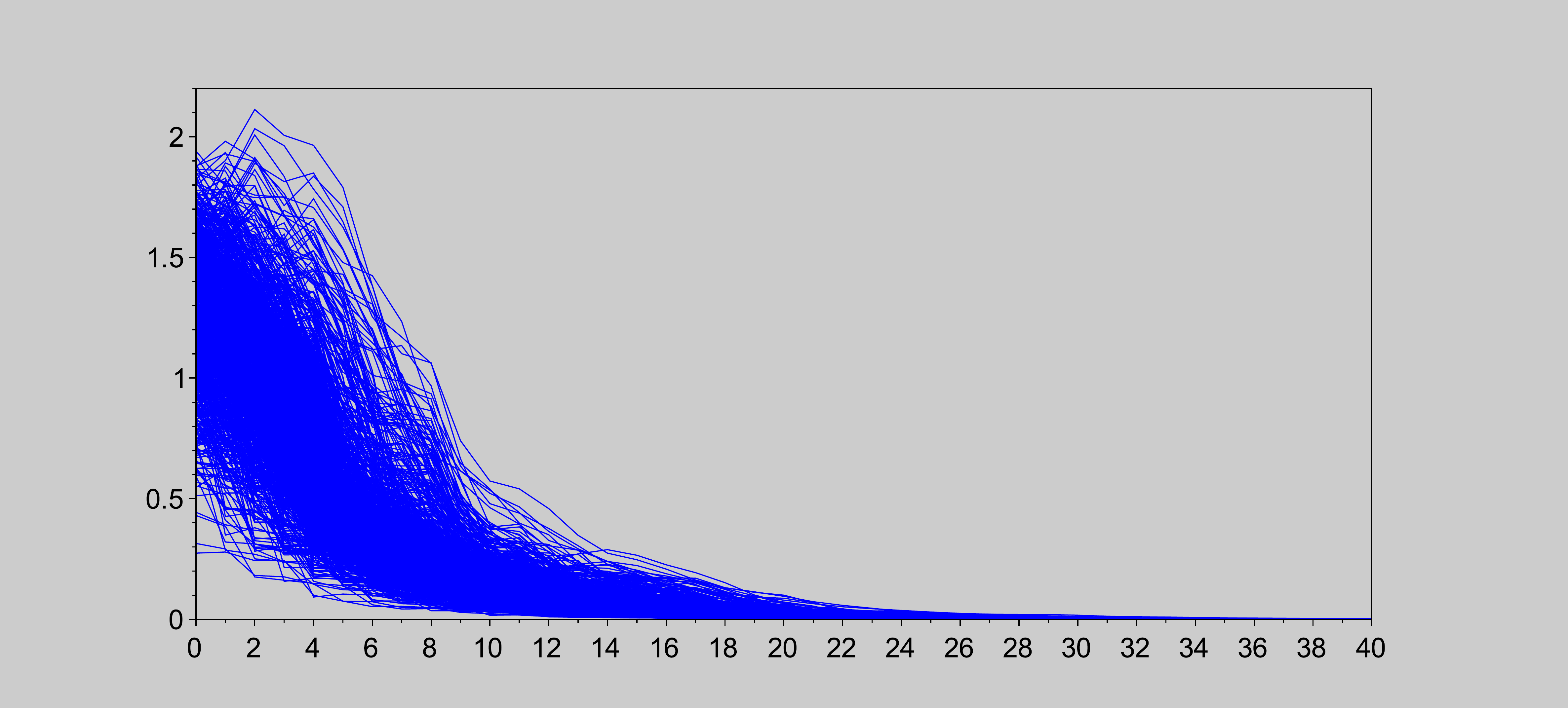}
        \caption{Plot of \((\norm{x(t)})_{t\in\N_{0}}\).}\label{fig:x_plot}
   \end{figure*}
\section{Conclusion}
\label{s:concln}
    To summarize, in this paper we presented an algorithm to compute stabilizing minimum dwell times for discrete-time switched linear systems when state-space models of their subsystems are not known explicitly. We consider that a set of finite traces of state trajectories of the subsystems that satisfies certain properties, is available. We design multiple Lyapunov functions corresponding to the subsystems from the subsystems data and determine a stabilizing minimum dwell time as a function of a set of scalars obtained from these Lyapunov functions.

    A next natural question is regarding the extension of our techniques to the design of stabilizing switching signals when not all subsystems are stable and the admissible switches between the subsystems are restricted. This matter is currently under investigation, and will be reported elsewhere.


\end{document}